\def\Exp{{\mathbb{E}}}
\newcommand{\R}{{\mathbb R}}
\theoremstyle{plain}
\theoremstyle{remark}
\theoremstyle{plain}
\newtheorem{Thm}{Theorem}
\newtheorem{Post}{Postulate}
\newcommand{\ceq}{\stackrel{+}{=}}
\title{Algorithmic independence of initial condition and dynamical law 
in
thermodynamics and causal inference}
\author[1]{Dominik Janzing}
\author[2,3]{Rafael Chaves}
\author[1]{Bernhard Sch\"olkopf}
\affil[1]{\footnotesize{Max Planck Institute for Intelligent Systems,
Spemannstr. 38,
72076 T\"ubingen, Germany}}
\affil[2]{\footnotesize{Institute for Physics \& FDM, University of Freiburg, 79104 Freiburg, Germany}}
\affil[3]{\footnotesize{Institute for Theoretical Physics, University of Cologne, 50937 Cologne, Germany}}
\date{December 7, 2015}
\begin{document}

\maketitle

\begin{abstract}
We postulate a principle stating that the initial condition of a physical system is typically algorithmically independent of the dynamical law. We argue that this links thermodynamics and causal inference. On the one hand, it entails behaviour that is similar to the usual arrow of time. On the other hand, it motivates a statistical asymmetry between cause and effect that has recently postulated in the field of causal inference, namely, that the probability distribution $P_{\rm cause}$ contains no information about the conditional distribution $P_{{\rm effect}|{\rm cause}}$ and vice versa, while $P_{\rm effect}$ may contain information about $P_{{\rm cause}|{\rm effect}}$.
\end{abstract}



Drawing causal conclusions from statistical data is at the heart of modern scientific research. While it is generally accepted that {\it active} interventions to a system (e.g. randomized trials in medicine) reveal causal relations, statisticians have widely shied away from drawing causal conclusions from {\it passive} observations. Meanwhile, however, the increasing interdisciplinary field of causal inference has shown that also the latter is possible --even without information about time order-- if appropriate assumptions that link causality and statistics are made~\cite{Spirtes1993,Pearl:00,pearl2014probabilistic}, with applications in biology~\cite{Friedman2004}, psychology~\cite{Jackson}, and economy~\cite{Moneta}. More recently, also foundational questions of quantum physics have been revisited in light of the formal language and paradigms of causal inference~\cite{Fritz2012, Leifer2013,Spekkens2015,Chaves2015a,Chaves2015b,Henson14,Ried15}.

Remarkably, recent results from causal inference have also provided new insights about the thorny issue of the arrow of time. Contrary to a wide-spread belief, the joint distribution $P_{X,Y}$ of two variables $X$ and $Y$ sometimes indicates whether $X$ causes $Y$ or vice versa \cite{Mooij2014}. More conventional methods rely on conditional independencies and thus require statistical information of at least $3$ observed variables \cite{Spirtes1993,Pearl:00}. The idea behind the new approach is that if $X$ causes $Y$, $P_X$ contains no information about $P_{Y \vert X}$ and vice versa.
Like the asymmetries between cause and effect, similar asymmetries between past and future are also manifest even in stationary time series \cite{ICMLTimeSeries} which can sometimes be used to infer the direction of empirical time series (e.g. in finance or brain research) or to infer the time direction of movies~\cite{Pickup}. Altogether, these results suggest a deeper connection for the asymmetries between cause vs.~effect and past vs.~future. In particular, a physical toy model relating such asymmetries to the usual thermodynamic arrow of time has been proposed~\cite{TimeSeriesThermo}.

Motivated by these insights, we propose a foundational principle for both types of asymmetries, basically stating that the initial state of a physical system and the dynamical law to which it is subjected to should be algorithmically independent. As we show, it implies for a closed system the non-decrease of physical entropy if the latter is identified with the Kolgomorov complexity, in agreement with Ref.~\cite{ZurekKol}.
Similarly, the independence of $P_{\mathrm{cause}}$ and $P_{\mathrm{effect}\vert\mathrm{cause}}$ postulated in causal inference naturally follows, if we identify cause and effect with the initial and final states of a physical system. Moreover, we employ our principle to understand open system dynamics and apply it to a toy model representing typical cause-effect relations.

\section{Algorithmic randomness in thermodymamics} 
The algorithmic randomness (or Kolmogorov complexity) $K(s)$ of a binary string $s$ is defined as the length of its shortest compression. More precisely, $K(s)$ is the length of the shortest program on a universal Turing machine that generates $s$ and stops then \cite{KolmoOr,ChaitinF}. In a seminal paper, Bennett \cite{BennettThermoReview}  proposed to consider $K(s)$ as the {\it thermodynamic entropy} of a microscopic state of a physical system when $s$ describes the latter with respect to some standard binary encoding after sufficiently fine discretization of the phase space. This assumes an `internal' perspective (followed here), where the microscopic state is perfectly known to the observer. Although $K(s)$ is in principle uncomputable, it can be estimated from the Boltzmann entropy in many-particle systems, given that the microscopic state is typical in a set of states satisfying some macroscopic constraints \cite{BennettThermoReview,ZurekKol}. That is, in practice one needs to rely on more conventional definitions of physical entropy.

From a theoretical and fundamental perspective, however, it is appealing to have a definition of entropy that neither relies on missing knowledge like the statistical Shannon/von-Neumann entropy \cite{Cover,NC}
nor on the separation between microscopic vs.~macroscopic states --which becomes problematic on the mesoscopic scale-- like the Boltzmann entropy \cite{Jaynes65}. For imperfect knowledge of the microscopic state, Zurek  \cite{ZurekKol} considers thermodynamic entropy as the sum of statistical entropy and Kolmogorov complexity \cite{CavesKolmo}, which thus unifies the statistical and the algorithmic perspective of physical entropy.

To discuss how $K(s)$ behaves under Hamiltonian dynamics, notice that dynamics on a continuous space is usually not compatible with discretization, which immediately introduces also statistical entropy in addition to the algorithmic term --particularly for chaotic systems \cite{CavesChaos}-- in agreement with standard entropy increase by coarse-graining \cite{Balian,Zeh01}. Remarkably, however, $K(s)$ can also increase by applying a one-to-one map $D$ on a {\it discrete} space \cite{ZurekKol}. Then $K(s)+K(D)$ is the tightest upper bound for $K(D(s))$ that holds for the general case. For a system starting in a simple initial state $s$ and evolving by the repeated application of some simple map $\tilde{D}$, the description of $s_t:=D(s):=\tilde{D}^t(s)$ essentially amounts to describing $t$ and Zurek derives logarithmic entropy increase until the scale of the recurrence time is reached \cite{ZurekKol}. Although logarithmic growth is rather weak \cite{CavesChaos}, it is worth mentioning that the arrow of time here emerges from assuming that the system starts in a {\it simple} state. We will later argue
that this is just a special case of the general assumption that it starts in a state that is independent of $D$. The fact that $K$ depends on the Turing machine could arguably spoil its use in physics, but, in the spirit of Deutsch's idea that the laws of physics determine the laws of computation \cite{Deutsch}, future research may define a 'more physical version' of $K$ by computation models whose elementary steps directly use physically realistic particle interactions, see e.g. \cite{KontiKomplex,Martineffizient,ErgodicQutrits}. Moreover, {\it quantum} thermodynamics \cite{QTbook} should rather rely on {\it quantum} Kolmogorov complexity \cite{MoraKraus}.

\section{Algorithmic randomness in causal inference} 
Reichenbach's principle \cite{Reichenbach1956} states that every statistical dependence between two variables $X,Y$ must involve some sort of causation: either direct causation ($X$ causes $Y$ or vice-versa) or a common cause for both $X$ and $Y$. Conversely, variables without causal relation are statistically independent. Surprisingly, however, statistical ensembles are not necessarily required for drawing causal conclusions. As argued in Ref.~\cite{Algorithmic}, two binary words $x,y$, representing two causally disconnected objects should also be algorithmically independent, where algorithmic dependence is measured by \cite{ChaitinF}
\begin{eqnarray*}
I(x:y)&:=& K(x)+K(y)-K(x,y)\ceq K(x)-K(x|y^*)\ceq K(y)-K(y|x^*)\,.
\end{eqnarray*}
As common in algorithmic information theory \cite{Vitanyi08}, the plus sign above (in)equalities or signs denote errors of order $O(1)$, i.e., they do not depend on the length of the binary strings $x,y$. Further, $K(x|y^*)$ denotes the conditional Kolmogorov complexity, i.e., the length of the shortest program that generates $x$ from $y^*$ (denoting the shortest compression of $y$). Thus, $I(x:y)$ is the number of bits saved when $x$ and $y$ are compressed jointly rather than independently, or, equivalently, the number of bits that the description of $x$ can be shortened when the shortest description of $y$ is known. 
Remarkably, algorithmic information can be used
to infer whether $X$ causes $Y$ (denoted by $X\rightarrow Y$) or $Y$ causes $X$ from their joint distribution, given that exactly one of the alternatives is true \footnote{The difficult question how well-defined causal directions emerge in physical
  systems where {\protect \it interactions} actually imply mutual influence is
  discussed for a toy model in Ref.~\cite {MitArmen}.}. If $P_{\rm cause}$ and $P_{{\rm effect}|{\rm cause}}$ are `independently chosen by nature' and thus causally unrelated, their algorithmic mutual information is also zero \cite{Algorithmic}. Here, it is implicitly assumed, for sake of simplicity, that the joint probability distribution is known and that it is computable.
This is, for instance the case if we replace `distribution' with the empirical frequencies
obtained via sufficiently fine binning of some observed data set with sufficiently large sampling.

First, consider linear non-Gaussian additive noise models \cite{Kano2003}. Then, the joint distribution $P_{X,Y}$ of two random variables $X,Y$ admits the linear model
\begin{equation}\label{eq:lingam}
Y=\alpha_X X + N_Y\,,
\end{equation}
where $\alpha_X\in \R$ and $N_Y$ is an unobserved noise term that is statistically independent of $X$ \footnote{Note that two variables $Z,W$ are called statistically independent if
  $p(Z,W)=p(Z)p(W)$, which is stronger than being uncorrelated, i.e.,
  $\Exp [ZW]=\Exp [Z] \Exp [W]$.}. Whenever $X$ or $N_Y$ is non-Gaussian, it follows that for every model of the form $X=\alpha_Y Y+N_X$, the noise term $N_X$ and $Y$ are statistically dependent, although they may be uncorrelated. That is, except for Gaussian variables, a linear model with independent noise can hold in one direction at most. Within that context, Ref.~\cite{Kano2003} infers the direction with additive independent noise to be the causal one. To justify this reasoning, Ref.~\cite{Steudel_additive_noise} argues that whenever \eqref{eq:lingam} holds, the densities of $P_Y$ and $P_{X|Y}$ are related by the differential equation
\[
 \frac{\partial^2}{\partial y^2} \log p(y)= -\frac{\partial^2}{ \partial y^2} \log p(x|y)  -\frac{1}{\alpha_X}\frac{\partial^2}{ \partial x \partial y}  p(x|y)\,.
\]
Therefore, knowing $P_{X|Y}$ enables a short description of $P_Y$.
Whenever $P_Y$ has actually high description length, we       
 should thus reject $Y\rightarrow X$ as a causal explanation.

Second, consider the information-geometric approach to causal inference \cite{deterministic,Info-Geometry}. Assume that $X$ and $Y$ are random variables with values in $[0,1]$, deterministically related by $Y=f(X)$ and $X=f^{-1}(Y)$, where $f$ is a monotonically increasing one-to-one mapping of $[0,1]$. Ref.~\cite{deterministic} proposes to infer $X\rightarrow Y$ whenever the derivative of $f$ and the probability density of $P_X$ satisfy
\begin{equation}\label{eq:igci}
\int_0^1 \log f'(x) p(x) dx < 0\,,
\end{equation}
and infer $Y\rightarrow X$ if the integral above is positive.
 The reason is that negative values are typical if $P_X$ is generated by a random procedure that is independent of $f$ \cite{Info-Geometry}, while positive values require $p(x)$ to be particularly large in regions where $f$ has large slope. Then, knowing $f$ (here describing the conditional distribution $P_{Y|X}$) constrains the distribution of $X$ \cite{JustiIGCI} and thus makes its description shorter.

\section{A common root for thermodynamics and causal inference} 
To provide a unifying foundation connecting thermodynamics and causal inference we propose the following postulate:
\begin{Post}[algorithmic independence between input and mechanism]\label{post:aim}
Let $s$ be the initial state of a system and $M$ a mechanism
mapping $s$ to a final state.
If the preparation of $s$ is done without any
knowledge of $M$, then
\begin{equation}\label{eq:icm}
I(s:M)\ceq 0\,.
\end{equation}
That is, knowing $s$ does not enable a shorter description of $M$
and vice versa.
\end{Post}
The postulate is entailed by the assumption that there is no algorithmic dependence in nature without an underlying causal relation. By overloading notation, we have identified mechanism and state with their encodings into binary strings. 
Generalizations of algorithmic mutual information for infinite strings can be found in \cite{Levin}, which then allows to apply Postulate~\ref{post:aim} to continuous physical state spaces. Here, however, we consider finite strings describing states after sufficiently fine discretizations of the state space instead,
neglecting issues from chaotic systems \cite{CavesChaos} for sake of conciseness.

\section{Dynamics of closed physical systems} 
To show the first consequence of Postulate \eqref{post:aim}, consider a physical system whose state space is a finite set $S$. Assuming that the dynamics $D$ is a bijective map of $S$, it follows that the entropy cannot decrease:
\begin{Thm}[no entropy decrease]\label{thm:entropyincr}
If the dynamics of a system is a one-to-one mapping  $D$ of a discrete
set $S$ of states then Postulate~\ref{post:aim} implies that
the Kolmogorov complexity can never decrease when applying $D$
to the initial state $s$, i.e.,
\begin{equation}\label{eq:entropyincr}
K(D(s))\stackrel{+}{\geq} K(s)\,.
\end{equation}
\end{Thm}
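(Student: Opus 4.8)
The plan is to derive the lower bound by eliminating the $K(D)$ term in the naive estimate $K(D(s))\cle K(s)+K(D)$, trading it for the postulated independence of input and mechanism. Applying Postulate~\ref{post:aim} with $M=D$ gives $I(s:D)\ceq 0$, and I would first rewrite this in conditional form. By the identity $I(s:D)\ceq K(s)-K(s\mid D^*)$ recorded in the excerpt, the hypothesis $I(s:D)\ceq 0$ is equivalent to
\[
K(s)\ceq K(s\mid D^*)\,,
\]
i.e.\ knowing (a shortest description of) the dynamics does not shorten the description of the initial state.

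The second step is where bijectivity enters. Since $D$ is a one-to-one map of the \emph{finite} set $S$, any description of $D$ also determines $D^{-1}$ (e.g.\ by inverting the finite lookup table), so relative to $D^*$ the strings $s$ and $D(s)$ are computable from one another. This yields
\[
K(s\mid D^*)\ceq K(D(s)\mid D^*)\,,
\]
where $\cle$ follows by applying $D$ to a shortest program for $s$ and $\cge$ by applying $D^{-1}$ to a shortest program for $D(s)$. Finally, conditioning never increases complexity, so $K(D(s)\mid D^*)\cle K(D(s))$. Chaining the three relations,
\[
K(s)\ceq K(s\mid D^*)\ceq K(D(s)\mid D^*)\cle K(D(s))\,,
\]
which is exactly \eqref{eq:entropyincr}.

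The step I expect to carry the real weight is the use of invertibility in the middle display: it alone supplies the \emph{lower} bound on $K(s\mid D^*)$, and without it one keeps only $K(D(s)\mid D^*)\cle K(s\mid D^*)$, recovering the trivial direction $K(D(s))\cle K(s)+K(D)$ rather than entropy non-decrease. The remaining work is bookkeeping of the $O(1)$ terms: I would verify that passing from $D$ to $D^*$, composing the constantly many programs computing $D$, $D^{-1}$ and the conditional estimates, and concatenating them, each cost only additive constants independent of the length of $s$, which is precisely what the $\ceq/\cle$ notation is designed to absorb.
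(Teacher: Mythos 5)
Your proposal is correct and follows exactly the paper's own argument: the chain $K(s)\ceq K(s\mid D^*)\ceq K(D(s)\mid D^*)\cle K(D(s))$, using independence for the first equality, bijectivity of $D$ (computability of $s$ from $D(s)$ and vice versa given $D^*$) for the second, and the fact that conditioning cannot increase complexity for the final inequality. Your additional remarks on where invertibility carries the weight and on the $O(1)$ bookkeeping are sound elaborations of the same proof.
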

\begin{proof}
Algorithmic independence of $s$ and $D$ amounts to $K(s)\ceq K(s|D^*)$. For any known bijection $D$, $s$ can be computed from $D(s)$ and vice versa implying that $K(s|D^*)\ceq K(D(s)|D^*)$. Thus, $K(s)\ceq K(s|D^*) \ceq K(D(s)|D^*) \stackrel{+}{\leq} K(D(s))$, concluding the proof.
\end{proof}
$D$ may also be considered as the $t$-fold concatenation of the same map $\tilde{D}$, where $\tilde{D}$ has negligible description length. Then, $I(s:D)\ceq 0$ amounts to $I(s:t)\ceq 0$. Theorem~\ref{thm:entropyincr} then implies that $K(\tilde{D}^t(s)) \stackrel{+}{\geq} K(s)$ whenever $t$ and $s$ are algorithmically independent. That is, while Ref.~\cite{ZurekKol} derives entropy increase for a simple initial state $s$, we have derived it for all states $s$ that are independent of $t$.

To further illustrate Theorem~\ref{thm:entropyincr}, consider a toy model of a physical system consisting of $n \times m$ cells, each being occupied or not with a particle, see Figure~\ref{fig:caordered}. Its state is described by a binary word $s$ with $nm$ digits. For generic $s$, we have $K(s)\approx nm$, while Figure~\ref{fig:caordered}, left, shows a simple state where all particles are in the left uppermost corner containing $k\times l$ cells. A description of this state $s$ consists essentially of describing $k$ and $l$ (up to negligible extra information specifying that $k$ and $l$ describe the size of the occupied region), which requires $\log_2 k+\log_2 l$ bits.
\begin{figure}
\centerline{
\includegraphics[width=0.6\columnwidth]{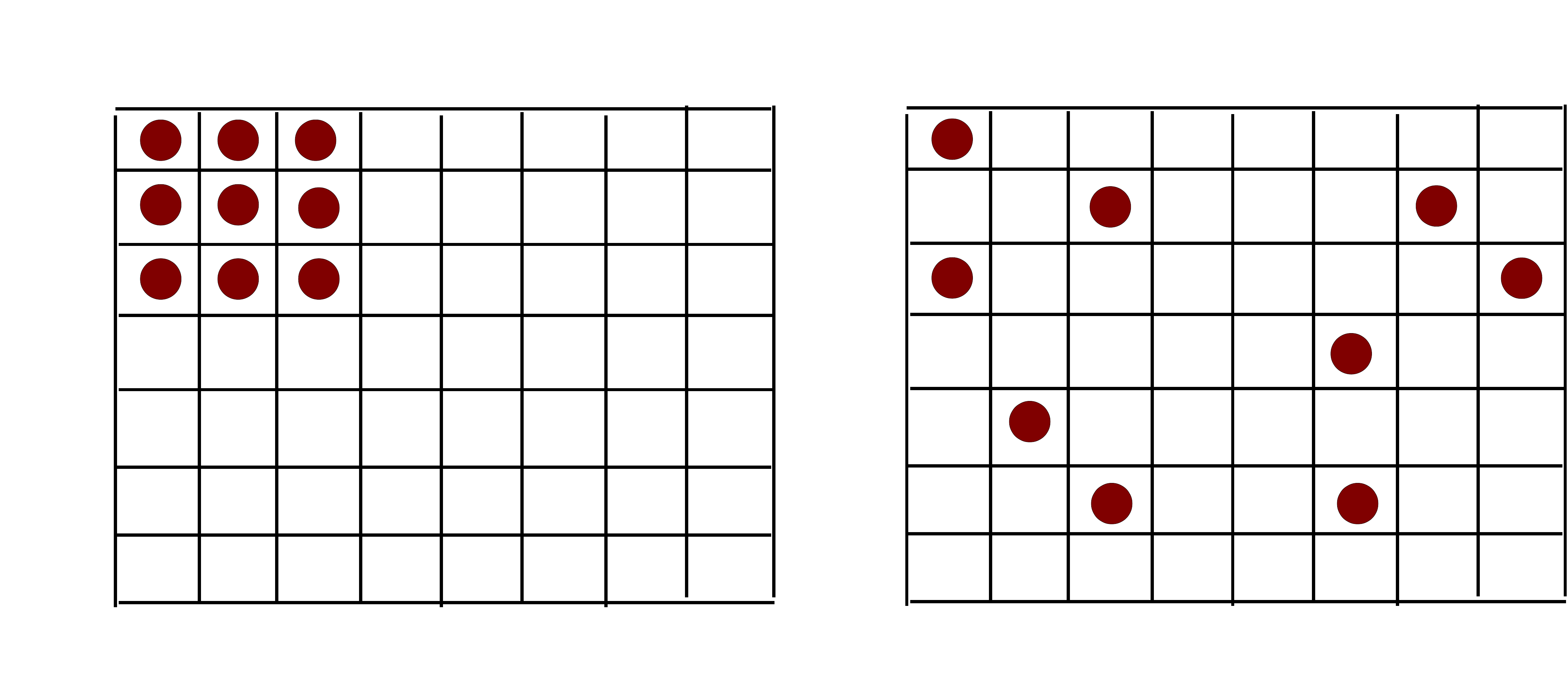}
}
\caption{\label{fig:caordered} Left: cellular automaton starting in a state with small description length. Right: the state $s'$ obtained from $s$ by application of the dynamical law $D$.
}
\end{figure}
Consider now that the dynamical evolution $D$ transforms $s$ into $s'=D(s)$ where $s'$ looks `more generic', as shown in Figure~\ref{fig:caordered}, right.
In principle, we cannot exclude that $s'$ is equally simple as $s$ due to some non-obvious pattern.  However, excluding this possibility as unlike, Theorem~\ref{thm:entropyincr} rules out any scenario where $s'$ is the initial state and $s$ the final state of {\it any} bijective mapping $D$ that is algorithmically independent of $s'$. The transition from $s$ to $s'$ can be seen as a natural model of a mixing process of a gas, as described by popular toy models like lattice gases \cite{Frisch87}. These observations are consistent with standard results of statistical mechanics saying that mixing is the typical behaviour, while de-mixing requires some rather specific tuning of microscopic states. Here we propose to formalize 'specific' by means of algorithmic dependencies between the initial state and the dynamics. This view does not necessarily generate novel insights for typical scenarios of statistical physics, but it introduces a link to crucial concepts in the field of causal inference.

\section{Dynamics of open systems}
Since applying \eqref{eq:icm} to closed systems reproduces the standard thermodynamic law of non-decrease of entropy, it is appealing to state algorithmic independence for closed system dynamics only and then obtain conditions under which the independence for open system follows.

Let $D$ be a one-to-one map transforming the initial joint state $(s,e)$ of system and environment into the final state $(s',e')$. For fixed $e$, define the open system dynamics $M: s\mapsto s'$. If $s$ is algorithmically independent of the pair $(D,e)$ (which is true, for instance when $K(e)$ is negligible and $s$ and $D$ are independent),
independence of $s$ and $M$ follows because algorithmic independence of two strings $a,b$ implies independence of $a,c$ whenever $c$ can be computed from $b$ via a program of length $O(1)$ \cite{Algorithmic}.

Further, we can extend the argument above to statistical ensembles: consider $n$ systems with identical state space $S$, each coupled to an environment with identical state space $E$. Let $(s_j,e_j)\in S\times E$ be the initial state of the $j$th copy. Then $s^n:=(s_1,\dots,s_n)$ and $e^n:=(e_1,\dots,e_n)$ define empirical distributions $P_S$ on $S$ and $P_E$ on $E$, respectively, counting the number of occurrences in the respective $n$-tuple. For fixed $P_E$, the dynamics $D$ acting on each copy $S\times E$ defines a conditional distribution for the final state $s'\in S$ of one copy, given its initial state $s\in S$ via
\[
P_{S'|S}(s'|s):=\sum_e P_E(e)\,,
\]
where the sum runs over all $e$ with $D(s,e)=(s',e')$ for some $e'\in E$. If $s^n$ is algorithmically independent of $(D,e^n)$, we can conclude that $P_S$ and $P_{S'|S}$ are algorithmically independent, because they are derived from two algorithmically independent objects via a program of length $O(1)$.
Defining the variable '${\rm cause}$' by the initial state of one copy $S$ and '${\rm effect}$' as the final state, we have thus derived the algorithmic independence of $P_{\rm cause}$ and $P_{{\rm effect}|{\rm cause}}$. Notice that it is not essential in the reasoning above that cause and effect describe initial and final states of the same physical system, one could as well consider a tripartite instead of a bipartite system.

\section{Physical toy model for a deterministic non-linear cause-effect relation} 
To illustrate how the independence of  $P_{\rm cause}$ and $P_{{\rm effect}|{\rm cause}}$ is inherited from algorithmic independence of initial state and dynamics of a closed system,
Figure~\ref{fig:wall}
\begin{figure}
\centerline{
\includegraphics[width=0.6\columnwidth]{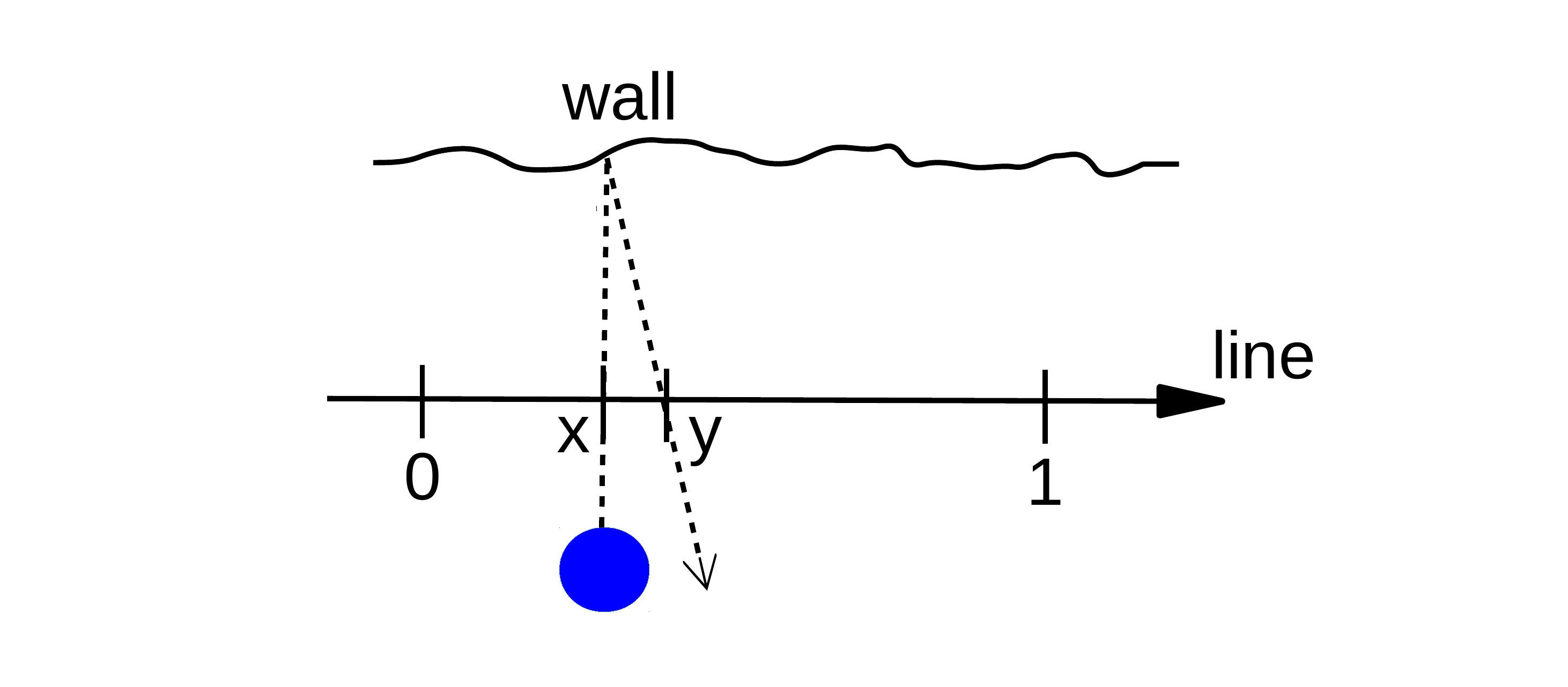}
}
\caption{\label{fig:wall} Physical system generating a non-linear deterministic causal relation: A particle travelling towards a structured wall
with momentum orthogonal to the wall, where it is backscattered in a slightly different direction. $x$ and $y$ denote the positions where $P$ crosses a vertical line before and after the scattering process, respectively.}
\end{figure}
shows a simple 2-dimensional system with a particle $P$ travelling towards a wall $W$ perpendicular to the momentum of $P$.
$P$ crosses a line $L$ parallel to $W$ at some position $x \in [0,1]$. Let the surface of $W$ be structured such that $P$ hits the wall with an incident angle that depends on its vertical position. Then $P$ crosses $L$ again at some position $y$. Assume that $L$ is so close to $W$ that the mapping $x\mapsto y=:f(x)$ is one-to-one. Also, assume that $0$ is mapped to $0$ and $1$ to $1$ and the experiment is repeated with particles having the same momenta but with different positions
such that $x$ is distributed according to some probability density $p(x)$. It is plausible that the initial distribution of momenta and positions do not contain information about the structure of $W$, while the final one does. Due to Theorem~\ref{thm:entropyincr}, the scattering process increases the Kolmogorov complexity of the state. Further, this process is thermodynamically irreversible for every thermodynamic machine that has no access to the structure of $W$.

We now focus on a restricted aspect of this physical process, namely the one leading from $p(x)$ to $p(y)$ via the function $f$, so we can directly apply the information-geometric approach to causal inference \cite{deterministic}. Again, the initial state $p(x)$ is algorithmically independent of $f$, while $p(y)$ contains information about $f$ (and thus about $f^{-1}$). Hence, algorithmic dependence indicates the time (and causal) direction of the process. Notice that the Shannon entropy of $p(y)$ is smaller than the entropy of $p(x)$ because the function $f$ will typically make the distribution less uniform \cite{deterministic}, while making it more uniform requires fine-tuning $f$ relative to $p(x)$ \footnote{However, due to Liouville's theorem, the Shannon entropy of the probability distribution in phase space (position and momenta) is certainly conserved by the scattering process.}. This could lead to misleading conclusions such as inferring the time direction from $p(y)$ to $p(x)$. This is of particular relevance in scenarios where simple criteria like entropy increase/decrease are inapplicable, at least without accounting for the description of the entire physical system (that often may be not available, e.g., if the momentum of the particle is not measured). The example above suggests how the algorithmic independence provides a new tool for the inference of time direction in such scenarios.

\section{Discussion} 
Already Reichenbach 
linked asymmetries between cause and effect to the arrow of time when he argued that the statistical dependence patterns
induced by causal structures $X\leftarrow  Z \rightarrow Y$ (common cause) vs.~$X\rightarrow Z \leftarrow Y$ (common effect) naturally emerges from the time direction of appropriate mixing processes \cite{Reichenbach1956}.
In this work we provide a new foundational principle describing additional asymmetries that appear when algorithmic rather than only statistical information is taken into account. As a consequence it follows naturally the non-decrease of algorithmic entropy and the independence between $P_{\mathrm{cause}}$ and $P_{\mathrm{effect}\vert\mathrm{cause}}$, thus providing further relations between thermodynamics and causal inference.

Intuitively, our principle resembles the standard way to obtain Markovian dynamics of open systems, coupling a system to a statistically independent environment \cite{Lindblad1976}. In this sense, our principle can be understood as a notion of Markovianity that is stronger in two respects: first, the initial state of the system is not only statistically but also algorithmically independent of the environment, and second, it is also algorithmically independent of the dynamical law. It thus provides a useful new rationale for finding the most plausible causal explanation for given observations arising in study of open systems. Furthermore, given the recent connections between the phenomenon of quantum nonlocality \cite{Brunner} with algorithmic information \cite{Poh,Bendersky} and causality~\cite{Fritz2012, Leifer2013,Spekkens2015,Chaves2015a,Chaves2015b,Henson14,Ried15}, our results may also point new directions for research in the foundations of quantum physics.


\paragraph{Acknowledgments}
The authors would like to thank Johan \r{A}berg and Philipp Geiger for helpful remarks on an earlier version of the manuscript. RC acknowledges financial support from the Excellence Initiative of the German Federal and State Governments (Grants ZUK 43 \& 81), the US Army Research Office under contracts W911NF-14-1-0098 and W911NF-14-1-0133 (Quantum Characterization, Verification, and Validation), the DFG (GRO 4334 \& SPP 1798).

\end{document}